\newfont{\nset}{msbm10}
\newtheorem{theo}{Theorem}[section]
\newtheorem{theorem}[theo]{Theorem}
\newtheorem{lemma}[theo]{Lemma}
\newtheorem{corollary}[theo]{Corollary}
\journal{Theoretical Computer Science}
\begin{document}

\begin{frontmatter}

\title{Independence number and the number of maximum independent sets in  pseudofractal scale-free web and Sierpi\'nski gasket}

\author[lable1,label2]{Liren Shan}

\author[lable1,label2]{Huan Li}

\author[lable1,label2]{Zhongzhi Zhang}
\ead{zhangzz@fudan.edu.cn}

\address[lable1]{School of Computer Science, Fudan
University, Shanghai 200433, China}
\address[label2]{Shanghai Key Laboratory of Intelligent Information
Processing, Fudan University, Shanghai 200433, China}

\begin{abstract}
As a fundamental subject of theoretical computer science, the maximum independent set (MIS) problem not only is of purely theoretical interest, but also has found wide applications in various fields. However, for a general graph determining the size of a MIS is NP-hard, and exact computation of the number of all MISs is even more difficult. It is thus of significant interest to seek special graphs for which the MIS problem can be exactly solved. In this paper, we address the MIS problem in the pseudofractal scale-free web and the Sierpi\'nski gasket, which have the same number of vertices and  edges. For both graphs, we determine exactly the independence number and the number of all possible MISs. The independence number of the pseudofractal scale-free web is as twice as the one of the Sierpi\'nski gasket. Moreover, the pseudofractal scale-free web has a unique MIS, while the number of MISs in the Sierpi\'nski gasket grows exponentially with the number of vertices. 
\end{abstract}

\begin{keyword}
Maximum independent  set\sep Independence number\sep  Minimum vertex cover\sep Scale-free network\sep Sierpi\'nski gasket\sep Complex network
\end{keyword}

\end{frontmatter}

\section{Introduction}

An independent set of a graph $\mathcal{G}$ with vertex set $\mathcal{V}$ is a subset $\mathcal{I}$ of $\mathcal{V}$, such that each pair of vertices in $\mathcal{I}$ is not adjacent in $\mathcal{G}$. A maximal independent set is an independent set that is not a subset of any other independent set. A largest maximal independent set is called a maximum independent set (MIS). In other words, a MIS is an independent set that has the largest size or cardinality. The cardinality of a MIS is referred to as the independence number of graph $\mathcal{G}$. A graph $\mathcal{G}$ is called a unique independence graph if it has a
unique MIS~\cite{HoSt85}. The MIS problem has a close connection with many other fundamental graph problems~\cite{Ro86,BeFu94,HaRa97}. For instance, the MIS problem in a graph  is equivalent to  the minimum vertex cover problem~\cite{Ka72} in the same graph, as well as the maximum clique problem in its complement graph~\cite{PaXu94}. In addition, the MIS problem is also closely related to graph coloring, maximum common induced subgraphs, and maximum common edge subgraphs~\cite{LiLuYaXiWe15}.

In addition to its intrinsic theoretical interest, the MIS problem has found important applications in a large variety of areas, such as coding theory~\cite{BuPaSeShSt02}, collusion detection in voting pools~\cite{ArFaDoSiKo11}, scheduling in wireless networks~\cite{JoLiRySh16}. For example, it was shown  in~\cite{BuPaSeShSt02} that the problem of finding the largest error correcting codes  can be reduced to the MIS problem on a graph. In~\cite{ArFaDoSiKo11} the problem of collusion detection was framed as identifying maximum independent sets. Moreover, finding a maximal weighted independent set in a wireless network is connected with the problem of organizing the vertices of the network in a hierarchical way~\cite{Ba01}. Finally, the MIS problem also has numerous applications in mining of graph data~\cite{LiLuYaXiWe15,ChLiZh17}.

In view of the theoretical and practical relevance, in the past decades the MIS problem has received much attention from different disciplines, e.g., theoretical computer science~\cite{MuPa02,XiNa13,AgHaLo13,HoKlLiLiPoWa15,LoMoRi15,ChEn16} and discrete mathematics~\cite{CeSt13,XiN16,Mo17}. It is well-known that solving the MIS problem of a generic graph is computationally difficult. Finding a MIS of a graph is a classic NP-hard problem~\cite{Ro86,HaRa97}, while enumerating all MISs in a graph is even \#P-complete~\cite{Va79TCS,Va79SiamJComput}. Due to the hardness of the MIS problem, exact algorithms for finding a MIS in a general graph take exponential time~\cite{FoKr10,TaTr77,BeImSa07}, which is infeasible for moderately
sized graphs. For practical applications, many local or heuristic algorithms were proposed to solve the MIS problem for those massive and intractable graphs~\cite{AnReWe12,DaLaSaScStWe16, LaSaScStWe17}.

Comprehensive empirical study~\cite{Ne03} has unveiled that large real networks are typically scale-free~\cite{BaAl99}, with their vertex degree following a power-law distribution $P(k) \sim k^{-\gamma}$. This nontrivial  heterogeneous structure has a strong effect on various topological and combinatorial aspects of a graph, such as average distances~\cite{ChLu02}, maximum matchings~\cite{LiSlBa11,ZhWu15}, and dominating sets~\cite{NaAk12, GaHaK15, ShLiZh17}. Although there have been concerted efforts to understanding the MIS problem in general, there has been
significantly less work focused on the MIS problem for power-law graphs~\cite{FePaPa08}. In particular, exact result about the independence number and the number of all MISs in a power-law graph is still lacking, despite the fact that exact result is helpful for testing heuristic algorithms. Moreover, the influence of scale-free behavior on the MIS problem is not well understood, although it is suggested to play an important role in the MIS problem.


The ubiquity of power-law phenomenon makes it interesting to uncover the dependence of MISs on the scale-free feature, which is helpful for understanding the applications of MIS problem.  In this paper, we study the independence number and the number of maximum independent sets in a scale-free graph, called pseudofractal scale-free web~\cite{DoGoMe02,ZhQiZhXiGu09}, and the Sierpi\'nski gasket. Both networks are deterministic and have the same number of vertices and edges. Note that since determining the independence number and counting all maximum independent sets in a general graph are formidable, we choose these two exactly tractable graphs. This is a fundamental route of research for NP-hard and  \#P-complete problems. For example, Lov{\'a}sz~\cite{LoPl86} pointed out that it is of great interest to find specific graphs for which the matching problem can be exactly solved, since the problem in general graphs is NP-hard.


By using an analytic technique based on a decimation procedure~\cite{KnVa86}, we find the exact independence number and the number of all possible maximum independent sets for both studied graphs. The independence number of the pseudofractal scale-free web is as twice as the one associated with the Sierpi\'nski gasket. In addition to this difference, there is a unique maximum independent set in the pseudofractal scale-free web, while the number of all maximum independent sets in the Sierpi\'nski gasket increases as an exponential function of the number of vertices. 


\section{Independence number and the number of maximum independent sets in pseudofractal scale-free web}

In this section, we study the independence number in the pseudofractal scale-free web, and demonstrate that its maximum independent set is unique.

\subsection{Network construction and properties}

The pseudofractal scale-free web~\cite{DoGoMe02,ZhQiZhXiGu09}  is constructed in an iterative way. Let  $\mathcal{G}_n$, $n\geq 1$, denote the network after $n$ iterations.  When $n=1$, $\mathcal{G}_1$ is a triangle. For $n > 1$,  $\mathcal{G}_n$ is obtained by adding, for every edge $(u,v)$ in $\mathcal{G}_{n-1}$ a new vertex connected to $u$ and $v$. Figure~\ref{network} illustrates the networks for the first several iterations. By construction,  the total number of edges in  $\mathcal{G}_n$ is $E_n=3^{n}$.

\begin{figure}
\begin{center}
\includegraphics{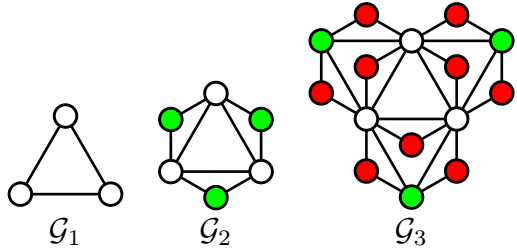} 
\end{center}
\caption[kurzform]{The first three iterations of the scale-free graph.} \label{network}
\end{figure}

The  network displays the striking properties observed in most real-life networks. First, it is scale-free, since the degree of its vertices obeys a power law  distribution $P(k)\sim k^{-\ln 3/ \ln 2}$~\cite{DoGoMe02}, implying  that the probability of a vertex chosen randomly having degree $k$ is approximately $k^{- \ln 3/ \ln 2}$. Moreover, it is small-world, with its average distance growing logarithmically with the number of vertices~\cite{DoGoMe02,ZhZhCh07}. Finally, it is highly clustered, with its average clustering coefficient converging to  $\frac{4}{5}$.

Of particular interest is the self-similarity of network $\mathcal{G}_n$, which is another ubiquitous property  of real networks~\cite{SoHaMa05}.  For  $\mathcal{G}_n$, the three vertices generated at  $n=1$ have the highest degree, which are called hub vertices, and are denoted by $A_{n}$, $B_{n}$, and $C_{n}$, respectively. The self-similar feature of the network can be seen from another construction approach~\cite{ZhZhCh07}. Given  the network $\mathcal{G}_{n}$, $\mathcal{G}_{n+1}$ can be obtained by joining three copies of $\mathcal{G}_{n}$ at their hub vertices, see Fig.~\ref{mergeF}. Let $\mathcal{G}_{n}^{(\theta)}$, $\theta=1,2,3$, be three replicas of $\mathcal{G}_{n}$, and denote  the three hub vertices of $\mathcal{G}_{n}^{(\theta)}$  by $A_{n}^{(\theta)}$, $B_{n}^{(\theta)}$,
and $C_{n}^{(\theta)}$, respectively. Then, $\mathcal{G}_{n+1}$ can be obtained by merging $\mathcal{G}_{n}^{(\theta)}$, $\theta=1,2,3$,  with $A_{n}^{(1)}$
(resp. $C_{n}^{(1)}$, $A_{n}^{(2)}$) and $B_{n}^{(3)}$ (resp. $B_{n}^{(2)}$,
$C_{n}^{(3)}$) being identified as the hub vertex $A_{n+1}$ (resp.
$B_{n+1}$, $C_{n+1}$) in $\mathcal{G}_{n+1}$.

\begin{figure}
\begin{center}
\includegraphics[width=.60\linewidth]
{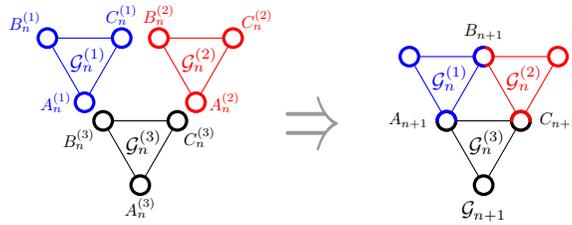}
\caption{Alternative construction of the scale-free network.} \label{mergeF}
\end{center}
\end{figure}

Let $N_n$ stand for  the total number of vertices in $\mathcal{G}_{n}$. By the second construction of the network, $N_n$ satisfies  relation $N_{n+1}=3N_n-3$, which together with the initial value $N_1=3$, is solved to give $N_n=(3^{n}+3)/2$.


\subsection{Independence number and the number of maximum independent  sets}



Let $\alpha_n$ denote the  independence number of network $\mathcal{G}_n$. To determine $\alpha_n$, we introduce some intermediate quantities. Since the three hub vertices in $\mathcal{G}_n$ are connected to each other, any independent  set of $\mathcal{G}_n$ contains at most one hub vertex.  We  classify all independent  sets of $\mathcal{G}_n$ into two subsets $\Omega_n^0$ and $\Omega_n^1$. $\Omega_n^0$ represents those independent  sets with no hub vertex, while  $\Omega_n^1$ denotes the remaining  independent  sets, with each having exactly one hub vertex. Let $\Theta_n^k$, $k = 0,1$, be the subset of $\Omega_n^k$, where each independent set has the largest cardinality (number of vertices), denoted by $\alpha_n^k$. By definition,  the independence number of network $\mathcal{G}_n$, $n\geq1$, is $\alpha_n = \max\{\alpha_n^0,\alpha_n^1\}$.  

The two quantities $\alpha_n^0$ and $\alpha_n^1$ can be evaluated by using  the self-similar structure of the network. 
\begin{lemma}
For two successive generation networks $\mathcal{G}_n$ and $\mathcal{G}_{n+1}$, $n\geq1$,
\begin{equation}\label{Ind02}
\alpha_{n+1}^0 = \max\{3\alpha_n^0,2\alpha_n^0+\alpha_n^1,\alpha_n^0 +2\alpha_n^1,3\alpha_n^1\},
\end{equation}
\begin{equation}\label{Ind03}
\alpha_{n+1}^1 = \max\{2\alpha_n^1+\alpha_n^0-1,3\alpha_n^1-1\}.
\end{equation}
\end{lemma}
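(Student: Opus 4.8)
The plan is to exploit the self-similar decomposition recalled above: $\mathcal{G}_{n+1}$ is the union of three copies $\mathcal{G}_n^{(1)},\mathcal{G}_n^{(2)},\mathcal{G}_n^{(3)}$ of $\mathcal{G}_n$ whose only shared vertices are the three hubs $A_{n+1},B_{n+1},C_{n+1}$, each of which lies in exactly two of the copies (the triple intersection being empty). I would first record the basic restriction--gluing correspondence. Restricting any independent set $S$ of $\mathcal{G}_{n+1}$ to the vertices of each copy yields an independent set $S_\theta$ of $\mathcal{G}_n^{(\theta)}$; conversely, any three independent sets of the copies that agree on the shared hubs glue back to an independent set of $\mathcal{G}_{n+1}$, since distinct copies interact only through those hubs. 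By inclusion--exclusion,
\[
|S| = |S_1|+|S_2|+|S_3| - h(S),
\]
where $h(S)$ is the number of hubs $A_{n+1},B_{n+1},C_{n+1}$ lying in $S$.

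Second, I would pin down which of $\alpha_n^0,\alpha_n^1$ is available for each copy under the relevant hub constraints. The three hubs of a single copy split into two that are identified as new hubs of $\mathcal{G}_{n+1}$ (\emph{shared} hubs) and one \emph{leftover} hub that becomes an ordinary interior vertex of $\mathcal{G}_{n+1}$. The key supporting fact is the hub-permuting symmetry of $\mathcal{G}_n$: the largest independent set of a copy containing one prescribed hub and no other hub has size $\alpha_n^1$ irrespective of which hub is prescribed, while the largest containing no hub has size $\alpha_n^0$.

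Third, I would split into the two cases. For $\alpha_{n+1}^0$, no new hub lies in $S$, so each copy must avoid its two shared hubs but may still use its leftover hub; hence each copy contributes either $\alpha_n^0$ (no hub) or $\alpha_n^1$ (leftover hub). Because the leftover hubs are interior to distinct copies, these three choices are independent, and maximizing $|S_1|+|S_2|+|S_3|$ over the $2^3$ combinations---grouped by how many copies use $\alpha_n^1$---produces exactly the four candidates in \eqref{Ind02}. For $\alpha_{n+1}^1$, by the cyclic symmetry of $\mathcal{G}_{n+1}$ I may assume the single new hub in $S$ is $A_{n+1}$. This hub lies in the two copies sharing it, which forces each of them to contain exactly one hub (the shared one) and thus contribute $\alpha_n^1$; the remaining copy avoids both of its shared hubs and contributes $\max\{\alpha_n^0,\alpha_n^1\}$. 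Subtracting $1$ via the formula above for the doubly counted hub $A_{n+1}$ yields $2\alpha_n^1+\max\{\alpha_n^0,\alpha_n^1\}-1$, i.e.\ the two candidates in \eqref{Ind03}.

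The main obstacle will be arguing \emph{tightness} rather than just the easy inequalities, i.e.\ that each stated maximum is simultaneously an upper bound and attainable. Concretely, I must verify that the restriction--gluing correspondence loses nothing (so an optimal $S$ really decomposes into optimal pieces, and optimal pieces reassemble without conflict across copies), that in the second case ``exactly one hub per shared copy'' is forced by the triangle on the hubs rather than merely assumed, and that the symmetry argument genuinely equates a prescribed-hub optimum with $\alpha_n^1$. Once these points are secured, both recursions follow by taking the maximum over the finitely many cases enumerated above.
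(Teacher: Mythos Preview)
Your argument is correct and follows essentially the same approach as the paper: decompose $\mathcal{G}_{n+1}$ into its three copies, classify the restriction to each copy according to which of that copy's hub vertices belong to the independent set, and take the maximum over all compatible configurations. The paper carries out this case analysis graphically via Figures~\ref{Theta0} and~\ref{Theta1}, whereas you make the same enumeration explicit through the restriction--gluing correspondence and the inclusion--exclusion formula $|S|=|S_1|+|S_2|+|S_3|-h(S)$; in particular, your observation that the hub triangle forces the two copies containing the selected hub to have exactly one hub each is precisely what the paper's Figure~\ref{Theta1} encodes pictorially.
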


\begin{proof}
By definition, $\alpha_{n+1}^k$, $k = 0,1$, is the cardinality of an independent set in $\Theta_{n+1}^k$. Below, we will show that both $\Theta_{n+1}^0$ and $\Theta_{n+1}^1$ can be constructed iteratively from $\Theta_n^0$ and $\Theta_n^1$. Then, $\alpha_{n+1}^0$ and $\alpha_{n+1}^1$  can be obtained from $\alpha_n^0$ and $\alpha_n^1$. We now establish the recursive relations for  $\alpha_n^0$ and $\alpha_n^1$.

We first prove graphically Eq.~\eqref{Ind02} .

Notice that $\mathcal{G}_{n+1}$ consists of three copies of $\mathcal{G}_n$, $\mathcal{G}_{n}^{(\theta)}$, $\theta=1,2,3$. By definition, for any independent  set $\chi$ in $\Theta_{n+1}^0$, the three hub vertices of $\mathcal{G}_{n+1}$ do not belong to $\chi$,  implying  that the corresponding six identified hub vertices of $\mathcal{G}_{n}^{(\theta)}$,  $\theta=1,2,3$, are not in $\chi$, see Fig.~\ref{mergeF}. Therefore, we can construct set $\chi$ from $\Theta_n^0$ and $\Theta_n^1$ by considering whether the hub vertices of $\mathcal{G}_{n}^{(\theta)}$, $\theta=1,2,3$, are in  $\chi$ or not.  Fig.~\ref{Theta0} illustrates all possible configurations of independent  sets $\Omega_{n+1}^0$ that include $\Theta_{n+1}^0$ as subsets. From Fig.~\ref{Theta0}, we obtain
\begin{equation*}
\alpha_{n+1}^0 = \max\{3\alpha_n^0,2\alpha_n^0+\alpha_n^1,\alpha_n^0 +2\alpha_n^1,3\alpha_n^1\}.
\end{equation*}

Similarly we can  prove Eq.~\eqref{Ind03}, the graphical representation of which is shown in Fig.~\ref{Theta1}.
\end{proof}




\begin{figure}[htbp]
\centering
\includegraphics[width=0.90\linewidth]{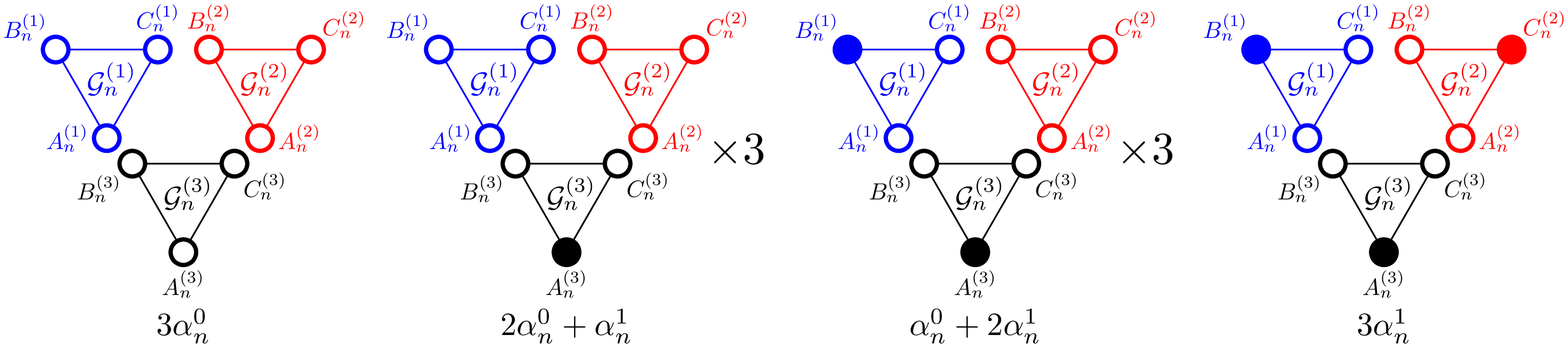}
\caption{\label{Theta0}Illustration of all possible configurations of independent sets $\Omega_{n+1}^0$ of $\mathcal{G}_{n+1}$, which contain $\Theta_{n+1}^0$. Only the hub vertices of $\mathcal{G}_{n}^{(\theta)}$, $\theta=1,2,3$, are shown. Filled vertices are in the independent sets, while open vertices are not.}
\end{figure}

\begin{figure}[htbp]
\centering
\includegraphics[width=0.6\linewidth]{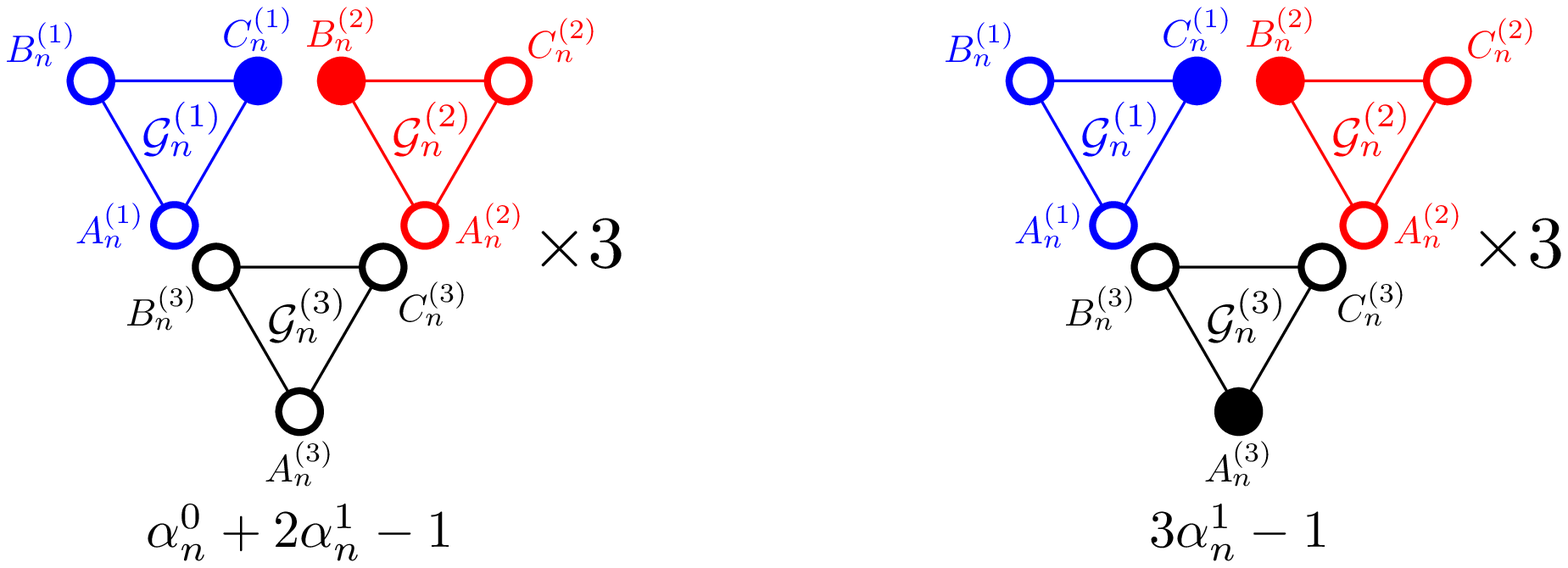}
\caption{\label{Theta1}Illustration of all possible configurations of  independent sets $\Omega_{n+1}^1$ of $\mathcal{G}_{n+1}$, which contain $\Theta_{n+1}^1$.}
\end{figure}



\begin{lemma}\label{Dom06}
For network $\mathcal{G}_n$, $n\geq 2$, $\alpha_n^1 < \alpha_n^0$.
\end{lemma}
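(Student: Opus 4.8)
The plan is to prove the statement by induction on $n$, using the recursions \eqref{Ind02} and \eqref{Ind03} established in the preceding lemma. The key observation is that once we know $\alpha_n^1 < \alpha_n^0$, both the four-way maximum in \eqref{Ind02} and the two-way maximum in \eqref{Ind03} collapse onto their leading terms, after which a direct subtraction shows the gap between $\alpha_{n+1}^0$ and $\alpha_{n+1}^1$ is in fact positive (and even widening).

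For the base case I would first record the seed values. Since $\mathcal{G}_1$ is a single triangle whose only three vertices are the mutually adjacent hubs, the largest independent set avoiding all hubs is empty while any single hub is a maximum independent set containing one hub, so $\alpha_1^0 = 0$ and $\alpha_1^1 = 1$. Note that the claimed inequality is in fact \emph{false} at $n=1$ (there $\alpha_1^0 < \alpha_1^1$), which is exactly why the lemma is stated only for $n \geq 2$. Feeding the seed values into \eqref{Ind02} and \eqref{Ind03} gives $\alpha_2^0 = \max\{0,1,2,3\} = 3$ and $\alpha_2^1 = \max\{1,2\} = 2$, so $\alpha_2^1 < \alpha_2^0$ and the base case holds.

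For the inductive step, assume $\alpha_n^1 < \alpha_n^0$ for some $n \geq 2$; since these quantities are integers this means $\alpha_n^0 - \alpha_n^1 \geq 1$. Then in \eqref{Ind02} the four arguments are strictly decreasing, $3\alpha_n^0 > 2\alpha_n^0 + \alpha_n^1 > \alpha_n^0 + 2\alpha_n^1 > 3\alpha_n^1$, so $\alpha_{n+1}^0 = 3\alpha_n^0$; likewise in \eqref{Ind03} we have $2\alpha_n^1 + \alpha_n^0 - 1 > 3\alpha_n^1 - 1$, so $\alpha_{n+1}^1 = \alpha_n^0 + 2\alpha_n^1 - 1$. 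Subtracting,
\begin{equation*}
\alpha_{n+1}^0 - \alpha_{n+1}^1 = 3\alpha_n^0 - (\alpha_n^0 + 2\alpha_n^1 - 1) = 2(\alpha_n^0 - \alpha_n^1) + 1 \geq 3 > 0,
\end{equation*}
which yields $\alpha_{n+1}^1 < \alpha_{n+1}^0$ and closes the induction.

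I do not expect a genuine obstacle here: the argument is a short induction. The only point requiring care is bookkeeping the direction of the inequalities — the claim genuinely reverses at $n=1$, so one must anchor the induction at $n=2$, and one must check that it is the induction hypothesis $\alpha_n^0 > \alpha_n^1$ (rather than mere nonnegativity) that forces each maximum onto its leading term. It is also worth noting that this computation incidentally produces the clean closed recursions $\alpha_{n+1}^0 = 3\alpha_n^0$ and $\alpha_{n+1}^1 = \alpha_n^0 + 2\alpha_n^1 - 1$ valid for $n \geq 2$, which can presumably be used to solve for $\alpha_n$ explicitly later.
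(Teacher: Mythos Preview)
Your proof is correct and follows essentially the same route as the paper's: induction on $n$ anchored at $n=2$, using the recursions \eqref{Ind02}--\eqref{Ind03} and the hypothesis $\alpha_n^1<\alpha_n^0$ to collapse each maximum to its leading term, yielding $\alpha_{n+1}^0=3\alpha_n^0$ and $\alpha_{n+1}^1=\alpha_n^0+2\alpha_n^1-1$, from which the inequality at level $n+1$ follows. Your treatment is slightly more explicit (computing the base case via the recursions from $\alpha_1^0=0$, $\alpha_1^1=1$ and tracking the difference $\alpha_{n+1}^0-\alpha_{n+1}^1=2(\alpha_n^0-\alpha_n^1)+1$), but there is no substantive difference in approach.
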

\begin{proof}
We prove this lemma by mathematical induction on $n$. For $n=2$, we  obtain $\alpha_2^1=2$, $\alpha_2^0=3$ by hand. Thus, the basis step holds immediately. \par
Assume that the statement holds for $t$ ($t\geq 2$). Then, according to Eq.~\eqref{Ind02}, $\alpha_{t+1}^0 = \max\{3\alpha_t^0,2\alpha_t^0+\alpha_t^1,\alpha_t^0 +2\alpha_t^1,3\alpha_t^1\}$. By induction hypothesis, we have
\begin{equation}\label{Falpha01}
\alpha_{t+1}^0 = 3\alpha_t^0.
\end{equation}
In an analogous way, we  obtain  relation
\begin{equation}\label{Falpha02}
\alpha_{t+1}^1 = \alpha_t^0 + 2\alpha_t^1-1.
\end{equation}
By comparing Eqs.~\eqref{Falpha01} and~\eqref{Falpha02} and using the induction hypothesis $\alpha_t^1 < \alpha_t^0$, we have $\alpha_{t+1}^1 < \alpha_{t+1}^0$. Thus, the lemma is true for $t+1$.
\end{proof}

\begin{theorem}\label{SFIndN}
The independence number of network $\mathcal{G}_n$, $n\geq 1$, is
\begin{equation}\label{Falpha04x}
\alpha_n = 3^{n-1}\,.
\end{equation}
\end{theorem}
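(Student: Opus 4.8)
The plan is to prove the formula by induction, leaning entirely on the two preceding lemmas, since those already contain the combinatorial substance. The starting observation is that, by definition, $\alpha_n = \max\{\alpha_n^0,\alpha_n^1\}$, while Lemma~\ref{Dom06} asserts $\alpha_n^1 < \alpha_n^0$ for every $n \geq 2$. Hence for $n \geq 2$ the independence number is simply $\alpha_n = \alpha_n^0$, and the whole task reduces to finding a closed form for $\alpha_n^0$ alone.

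The key step is to feed Lemma~\ref{Dom06} back into the recursion~\eqref{Ind02}. Because $\alpha_n^1 < \alpha_n^0$, each of the four arguments $3\alpha_n^0$, $2\alpha_n^0+\alpha_n^1$, $\alpha_n^0+2\alpha_n^1$, $3\alpha_n^1$ is bounded above by $3\alpha_n^0$, which is itself one of them; so the maximum collapses to a single term and $\alpha_{n+1}^0 = 3\alpha_n^0$ for all $n \geq 2$. This is exactly relation~\eqref{Falpha01} already recorded in the proof of Lemma~\ref{Dom06}, and what remains is a pure geometric recursion.

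To solve it I would first pin down the base value. By direct inspection of $\mathcal{G}_2$ one gets $\alpha_2^0 = 3$; equivalently, running~\eqref{Ind02} once from the triangle's values $\alpha_1^0 = 0$ and $\alpha_1^1 = 1$ gives $\alpha_2^0 = \max\{0,1,2,3\} = 3$. Iterating $\alpha_{n+1}^0 = 3\alpha_n^0$ from $\alpha_2^0 = 3 = 3^{1}$ then yields $\alpha_n^0 = 3^{n-1}$, whence $\alpha_n = \alpha_n^0 = 3^{n-1}$ for all $n \geq 2$. Finally I would check $n=1$ by hand: $\mathcal{G}_1$ is a single triangle, whose independence number is $1 = 3^0$, so the stated formula covers this case as well.

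As for difficulty, there is essentially no remaining obstacle, precisely because the hard work was front-loaded into Lemma~\ref{Dom06}. The one point that genuinely needs care is the identification $\alpha_n = \alpha_n^0$: it must be justified that admitting a hub vertex never produces a larger independent set than excluding all three hubs, and this is exactly what the strict inequality $\alpha_n^1 < \alpha_n^0$ guarantees. Once that is in place the closed form is immediate, and the only residual bookkeeping is getting the base index right so that $\alpha_2^0 = 3$ matches $3^{2-1}$ and the $n=1$ triangle is handled separately.
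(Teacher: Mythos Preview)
Your proposal is correct and follows essentially the same approach as the paper: use Lemma~\ref{Dom06} to identify $\alpha_n=\alpha_n^0$ for $n\ge 2$, collapse the recursion~\eqref{Ind02} to $\alpha_{n+1}^0=3\alpha_n^0$ (the paper's Eq.~\eqref{Falpha01}), and solve the geometric recurrence. Your treatment of the base cases is in fact a bit more careful than the paper's, which writes $\alpha_{n+1}=3\alpha_n$ and then invokes $\alpha_1=1$ without separately checking that step at $n=1$; your explicit computation of $\alpha_2^0=3$ and separate verification of the triangle case cleanly closes that small gap.
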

\begin{proof}
Lemma~\ref{Dom06}  indicates that any maximum independent vertex set of $\mathcal{G}_n$ contains no hub vertices. By Eq.~\eqref{Falpha01}, we obtain
\begin{equation}\label{Falpha05}
\alpha_{n+1} = \alpha^0_{n+1} = 3\alpha^0_n = 3\alpha_n \,.
\end{equation}
Considering the initial condition $\alpha_{1}=1$, the above equation is solved to give the result.
\end{proof}


\begin{corollary}
The largest number of vertices in an independent vertex set of $\mathcal{G}_n$, $n\geq2$, which contains exactly $1$ hub vertex, is
\begin{equation}\label{Falpha07}
\alpha_n^1 = 3^{n-1} - 2^{n-1} + 1.
\end{equation}
\end{corollary}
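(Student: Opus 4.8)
The plan is to reduce the corollary to a single first-order linear recurrence in $\alpha_n^1$ alone, and then verify the claimed closed form by induction on $n$ (or solve the recurrence outright). The key observation is that the proof of Lemma~\ref{Dom06} already supplies, for every $t \geq 2$, the relation \eqref{Falpha02}, namely $\alpha_{t+1}^1 = \alpha_t^0 + 2\alpha_t^1 - 1$; this holds precisely because the strict inequality $\alpha_t^1 < \alpha_t^0$ guarantees that in \eqref{Ind03} the first term of the maximum, $2\alpha_t^1 + \alpha_t^0 - 1$, exceeds the second, $3\alpha_t^1 - 1$. Combining this with Theorem~\ref{SFIndN}, which gives $\alpha_t^0 = \alpha_t = 3^{t-1}$, I would substitute to obtain the self-contained recursion
\begin{equation*}
\alpha_{n+1}^1 = 2\alpha_n^1 + 3^{n-1} - 1, \qquad n \geq 2.
\end{equation*}

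First I would record the base case $\alpha_2^1 = 2$, which was already computed by hand in the proof of Lemma~\ref{Dom06}, and check that it agrees with the proposed formula \eqref{Falpha07} at $n=2$, since $3^{1} - 2^{1} + 1 = 2$. Then the cleanest route is a direct induction: assuming $\alpha_n^1 = 3^{n-1} - 2^{n-1} + 1$, substitute into the recursion to get $\alpha_{n+1}^1 = 2(3^{n-1} - 2^{n-1} + 1) + 3^{n-1} - 1 = 3^n - 2^n + 1$, which is exactly \eqref{Falpha07} evaluated at $n+1$. This closes the induction and establishes the formula for all $n \geq 2$.

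Alternatively, one can solve the recurrence in closed form: the homogeneous part contributes $C\,2^n$, a particular solution for the $3^{n-1}$ forcing term is $3^{n-1}$, and a particular solution for the constant $-1$ is $1$, so $\alpha_n^1 = C\,2^n + 3^{n-1} + 1$; imposing $\alpha_2^1 = 2$ forces $C = -\tfrac{1}{2}$ and recovers \eqref{Falpha07}. I do not expect any genuine obstacle here, since all the structural work has been done in the preceding lemmas. The only point requiring care is the range of validity: the recursion \eqref{Falpha02} is guaranteed only for $t \geq 2$ because it relies on $\alpha_t^1 < \alpha_t^0$, which Lemma~\ref{Dom06} asserts only from $n=2$ onward; this is why the base case must be taken at $n=2$ rather than $n=1$, and why the corollary is stated for $n \geq 2$.
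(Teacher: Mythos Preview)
Your proposal is correct and follows essentially the same approach as the paper: derive the recursion $\alpha_{n+1}^1 = 2\alpha_n^1 + 3^{n-1} - 1$ by substituting $\alpha_n^0 = 3^{n-1}$ (from Theorem~\ref{SFIndN}) into \eqref{Falpha02}, and then solve it with the initial condition $\alpha_2^1 = 2$. Your write-up is in fact more detailed than the paper's, which simply states the recursion and asserts that it is solved to give \eqref{Falpha07}.
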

\begin{proof}
By Eqs.~\eqref{Falpha04x} and~\eqref{Falpha05}, we derive $\alpha_n^0=\alpha_n=3^{n-1}$. Using  Eq.~\eqref{Falpha02}, we obtain the following recursive equation for $\alpha_n^1$:
\begin{equation}\label{Falpha06}
\alpha_{n+1}^1 = 2\alpha_n^1 + 3^{n-1}-1,
\end{equation}
which together with the boundary condition $\alpha_2^1 = 2$ is solved to yield Eq.~\eqref{Falpha07}.
\end{proof}

\begin{theorem}\label{ThereoreGn}
For network $\mathcal{G}_n$,  $n \geq 2$, there is a unique maximum independent  set.
\end{theorem}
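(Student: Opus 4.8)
The plan is to prove uniqueness by induction on $n$, exploiting the same self-similar decomposition of $\mathcal{G}_{n+1}$ into three copies of $\mathcal{G}_n$ that drives the recursions. First I would record a convenient reformulation: by Lemma~\ref{Dom06} and Theorem~\ref{SFIndN}, for $n\geq 2$ every maximum independent set of $\mathcal{G}_n$ has cardinality $\alpha_n=\alpha_n^0$ and contains no hub vertex, so the family of all maximum independent sets of $\mathcal{G}_n$ coincides exactly with $\Theta_n^0$. Hence ``$\mathcal{G}_n$ has a unique MIS'' is the same assertion as ``$|\Theta_n^0|=1$,'' and it is this latter form that I would carry through the induction.

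\emph{Base case} $n=2$: I would check this by hand. The graph $\mathcal{G}_2$ is the hub triangle $A_2,B_2,C_2$ together with three degree-two vertices, each adjacent to exactly two hubs. These three degree-two vertices are pairwise nonadjacent and form an independent set of size $3=\alpha_2$, whereas admitting any hub forces dropping two of them and caps the size at $2$. Thus the set of three degree-two vertices is the unique MIS.

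\emph{Inductive step}: assume $\mathcal{G}_n$ ($n\geq2$) has a unique MIS, i.e. $\Theta_n^0=\{I_n\}$, and let $\chi$ be any MIS of $\mathcal{G}_{n+1}$. By Lemma~\ref{Dom06} applied to $\mathcal{G}_{n+1}$, none of $A_{n+1},B_{n+1},C_{n+1}$ lies in $\chi$. From the merging description the only vertices shared among the copies $\mathcal{G}_n^{(\theta)}$ are precisely these three hubs, so the restrictions $\chi^{(\theta)}=\chi\cap V(\mathcal{G}_n^{(\theta)})$ partition $\chi$ and $|\chi|=\sum_{\theta}|\chi^{(\theta)}|$. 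Within each copy the three hub vertices form a triangle and two of them are identified with hubs of $\mathcal{G}_{n+1}$, hence excluded from $\chi$; therefore $\chi^{(\theta)}$ contains at most one hub, giving $|\chi^{(\theta)}|\leq\alpha_n^0$ in the no-hub case and $|\chi^{(\theta)}|\leq\alpha_n^1$ in the one-hub case, so $|\chi^{(\theta)}|\leq\alpha_n^0$ in all cases. Summing yields $|\chi|\leq 3\alpha_n^0=\alpha_{n+1}$, and maximality forces equality, whence $|\chi^{(\theta)}|=\alpha_n^0$ for every $\theta$. Since $\alpha_n^1<\alpha_n^0$ by Lemma~\ref{Dom06}, the one-hub case is impossible, so each $\chi^{(\theta)}$ is a no-hub MIS of $\mathcal{G}_n^{(\theta)}$, i.e. an element of $\Theta_n^0$. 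By the induction hypothesis each $\chi^{(\theta)}$ must equal the unique set $I_n$ (carried into copy $\theta$), so $\chi=\chi^{(1)}\cup\chi^{(2)}\cup\chi^{(3)}$ is completely determined, which closes the induction.

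I expect the main obstacle to be the bookkeeping in the inductive step: correctly identifying which hub of each copy is ``free'' versus identified with a hub of $\mathcal{G}_{n+1}$, and verifying that the three copies meet only at $A_{n+1},B_{n+1},C_{n+1}$, so that the restrictions $\chi^{(\theta)}$ genuinely partition $\chi$ with no double counting. Once that is in place, the strict inequality $\alpha_n^1<\alpha_n^0$ supplied by Lemma~\ref{Dom06} does the real work of collapsing the equality case to three no-hub maximum sets, after which the induction hypothesis delivers uniqueness with no further computation.
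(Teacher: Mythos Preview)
Your proof is correct and follows essentially the same approach as the paper's: both argue by induction on $n$, using the self-similar decomposition of $\mathcal{G}_{n+1}$ into three copies of $\mathcal{G}_n$ together with the strict inequality $\alpha_n^1<\alpha_n^0$ from Lemma~\ref{Dom06} to force each restriction of a maximum independent set to be the unique element of $\Theta_n^0$ in its copy. Your write-up is more explicit than the paper's about why the restrictions partition $\chi$ and why the one-hub case is ruled out, but the underlying argument is the same.
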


\begin{proof}
Eq.~\eqref{Falpha05} and Fig.~\ref{Theta0} mean that for $n \geq 2$ any maximum independent set of $\mathcal{G}_{n+1}$ is actually the union of maximum independent sets, $\Theta_n^0$, of the three copies of $\mathcal{G}_{n}$ (i.e. $\mathcal{G}_{n}^{(1)}$, $\mathcal{G}_{n}^{(2)}$, and $\mathcal{G}_{n}^{(3)}$) constituting $\mathcal{G}_{n+1}$. Thus, any maximum independent set of $\mathcal{G}_{n+1}$ is determined by those of $\mathcal{G}_{n}^{(1)}$, $\mathcal{G}_{n}^{(2)}$, and $\mathcal{G}_{n}^{(3)}$.  Because the maximum independent set of $\mathcal{G}_2$ is unique, there is a unique  maximum independent set for $\mathcal{G}_n$ for all $n \geq 2$. Furthermore,  the unique maximum independent set of $\mathcal{G}_n$, $n \geq 2$, is in fact the set of all vertices that are generated at the $(n-1)$-th iteration.
\end{proof}

Theorem~\ref{ThereoreGn} indicates that the pseudofractal scale-free web is a unique independence graph.

\section{Independence number and the number of maximum independent  sets in Sierpi\'nski gasket}

In this section, we consider the independence number and the number of maximum independent  sets in the Sierpi\'nski gasket, and compare the results with those of the pseudofractal scale-free web, with an aim to unveil the effect of network structure, in particular the scale-free property, on the independence  number and the number of maximum independent sets.

\subsection{Construction of Sierpi\'nski gasket}

The Sierpi\'nski gasket is also constructed iteratively. Let  $\mathcal{S}_n$, $n\geq 1$, represent the $n$-generation graph. For $n=1$, $\mathcal{S}_1$ is an equilateral triangle with three vertices and three edges. For $n =2$, perform a bisection of the three edges of  $\mathcal{S}_1$   forming four smaller replicas of the original equilateral triangle, and remove the central downward pointing  equilateral triangle to get $\mathcal{S}_2$. For $n>2$, $\mathcal{S}_n$ is obtained from $\mathcal{S}_{n-1}$ by performing the  
above two operations for each triangle in $\mathcal{S}_{n-1}$. Fig.~\ref{Sierp01} illustrates the first several iterations of the Sierpi\'nski gaskets $\mathcal{S}_n$ for $n=1,2,3$.

\begin{figure}
\begin{center}
\includegraphics[width=0.40\textwidth]{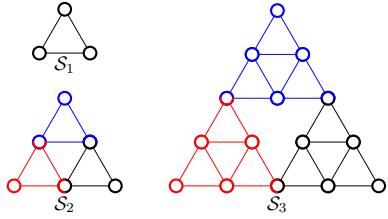} 
\end{center}
\caption[kurzform]{The first three generations of the
Sierpi\'nski gasket.} \label{Sierp01}
\end{figure}

Both the number of vertices and the number of edges in the Sierpi\'nski gasket $\mathcal{S}_n$ are the same as those for the scale-free network $\mathcal{G}_n$, which are equal to $N_n=(3^{n}+3)/2$ and $E_n=3^{n}$, respectively.

In contrast to the inhomogeneity of $\mathcal{G}_n$, the Sierpi\'nski gasket is homogeneous. The degree of vertices in $\mathcal{S}_n$ is equal to 4, except  the topmost vertex $A_n$, the leftmost vertex $B_n$, and the rightmost vertex $C_n$, the degree of which is 2. These three vertices with degree 2 are called  outmost vertices hereafter.

\begin{figure}
\begin{center}
\includegraphics[width=7.0cm]{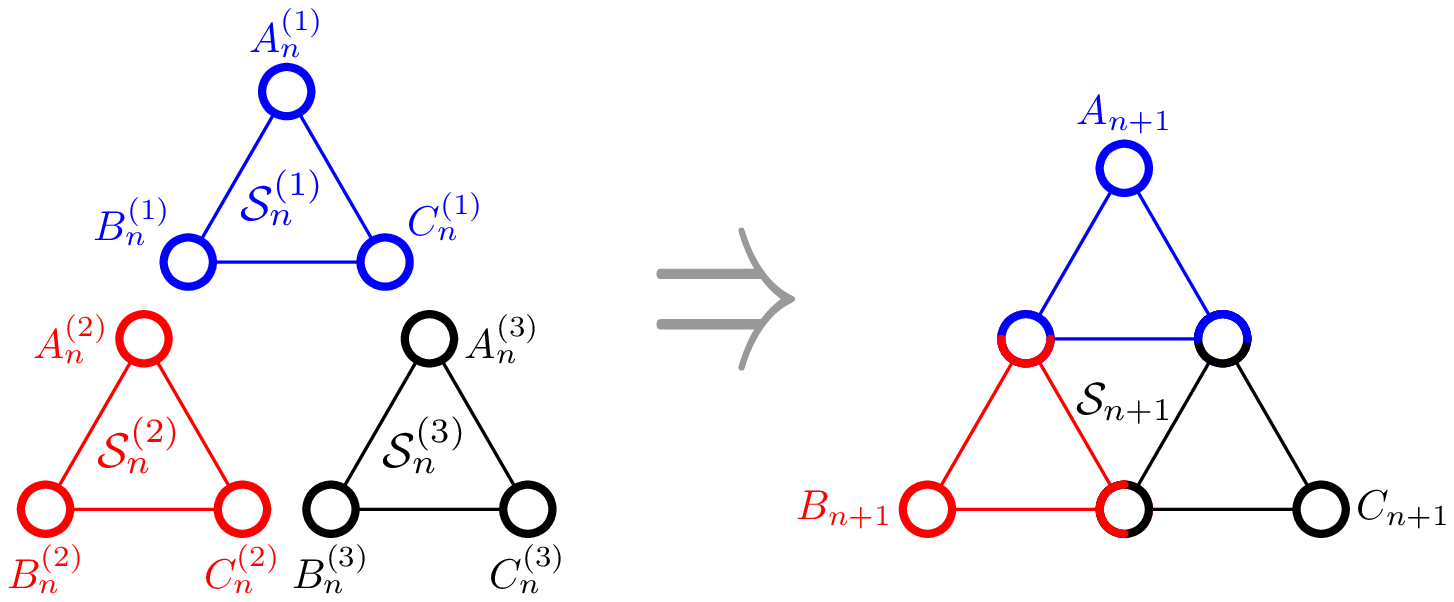}
\caption{Alternative construction of the Sierpi\'nski gasket.} \label{merge}
\end{center}
\end{figure}

Analogously  to the scale-free network $\mathcal{G}_n$, the Sierpi\'nski gasket also exhibits the self-similar property, which suggests an alternative construction way of the graph. Given the $n$th generation graph $\mathcal{S}_{n}$, the $(n+1)$th generation graph $\mathcal{S}_{n+1}$ can be obtained by amalgamating three copies of $\mathcal{S}_{n}$ at their outmost vertices, see Fig.~\ref{merge}. Let $\mathcal{S}_{n}^{(\theta)}$, $\theta=1,2,3$, represent three copies of $\mathcal{S}_{n}$. And denote the three outmost vertices of $\mathcal{S}_{n}^{(\theta)}$ by $A_{n}^{(\theta)}$, $B_{n}^{(\theta)}$, and $C_{n}^{(\theta)}$, respectively. Then, $\mathcal{S}_{n+1}$ can be obtained by coalescing $\mathcal{S}_{n}^{(\theta)}$, $\theta=1,2,3$, with $A_{n}^{(1)}$, $B_{n}^{(2)}$, and $C_{n}^{(3)}$ being the outmost vertices $A_{n+1}$, $B_{n+1}$, and $C_{n+1}$  of $\mathcal{S}_{n+1}$.


\subsection{Independence   number}

In this case without causing confusion,  we employ the same notation as those for $\mathcal{G}_n$ to study related quantities for the Sierpi\'nski gasket $\mathcal{S}_n$. Let $\alpha_n$ be the independence number of $\mathcal{S}_n$. Note that all independent  sets of $\mathcal{S}_n$ can be sorted into four types: $\Omega_n^0$, $\Omega_n^1$, $\Omega_n^2$, and $\Omega_n^3$, where $\Omega_n^k$,  $k = 0,1,2,3$,  stands for those independent sets, each of which includes exactly $k$ outmost vertices of $\mathcal{S}_n$. Let $\Theta_n^k$, $k = 0,1,2,3$, denote the subsets of $\Omega_n^k$, each independent set in which has the largest cardinality, denoted by $\alpha_n^k$.  Then, the independence  number of the Sierpi\'nski gasket $\mathcal{S}_n$, $n\geq1$, is $\alpha_n = \max\{\alpha_n^0,\alpha_n^1,\alpha_n^2,\alpha_n^3\}$.   Therefore,  to  determine $\alpha_n$ for $\mathcal{S}_n$, one can alternatively determine $\alpha_n^k$, $k = 0,1,2,3$, which can be solved by  establishing some relations between them, based on the self-similar architecture of the  Sierpi\'nski gasket.  



\begin{lemma}
\label{leSGInd02}
For any integer $n \geq 3$, the following relations hold.
\begin{equation}\label{SGInd02}
\alpha_{n+1}^0 = \max\{3\alpha_n^0, \alpha_n^0+2\alpha_n^1-1, 2\alpha_n^1+ \alpha_n^2-2, 3\alpha_n^2-3\},
\end{equation}
\begin{equation}\label{SGInd03}
\alpha_{n+1}^1 = \max\{2\alpha_n^0+\alpha_n^1, \alpha_n^0+\alpha_n^1+\alpha_n^2-1, 3\alpha_n^1-1, 2\alpha_n^1+\alpha_n^3-2, \alpha_n^1+2\alpha_n^2-2, 2\alpha_n^2+\alpha_n^3-3\},
\end{equation}
\begin{equation}\label{SGInd04}
\alpha_{n+1}^2 = \max\{\alpha_n^0+2\alpha_n^1, \alpha_n^0+2\alpha_n^2-1, 2\alpha_n^1+\alpha_n^2-1,3\alpha_n^2-2, \alpha_n^1+\alpha_n^2+\alpha_n^3-2, \alpha_n^2+2\alpha_n^3-3\},
\end{equation}
\begin{equation}\label{SGInd05}
\alpha_{n+1}^3 = \max\{3\alpha_n^1, \alpha_n^1+2\alpha_n^2-1, 2\alpha_n^2+\alpha_n^3-2, 3\alpha_n^3-3\}.
\end{equation}
\end{lemma}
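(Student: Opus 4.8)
The plan is to exploit the self-similar decomposition of $\mathcal{S}_{n+1}$ into three copies $\mathcal{S}_n^{(1)},\mathcal{S}_n^{(2)},\mathcal{S}_n^{(3)}$ that are glued only at single vertices. First I would read off from Fig.~\ref{merge} exactly which outmost vertices are identified: the three outmost vertices of $\mathcal{S}_{n+1}$ are $A_{n+1}=A_n^{(1)}$, $B_{n+1}=B_n^{(2)}$, $C_{n+1}=C_n^{(3)}$, while the three \emph{shared} vertices at which the copies meet pairwise are $B_n^{(1)}=A_n^{(2)}$, $C_n^{(1)}=A_n^{(3)}$, and $C_n^{(2)}=B_n^{(3)}$. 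The crucial structural fact is that distinct copies overlap in single vertices and never in an edge; hence a subset $\chi$ of the vertex set of $\mathcal{S}_{n+1}$ is independent if and only if its restriction to each copy is independent there, with the understanding that each shared vertex carries a single common membership state for the two copies meeting at it.

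Next I would parametrize an independent set in $\Theta_{n+1}^k$ by the three binary membership states $s_1,s_2,s_3\in\{0,1\}$ of the shared vertices, the states of the three outmost vertices being already fixed so that exactly $k$ of them lie in $\chi$. Using the $S_3$ symmetry acting on the corners of each copy, the maximum number of vertices contributed by a copy depends only on how many of its three outmost vertices lie in $\chi$, and that maximum is precisely $\alpha_n^{j}$ when this count equals $j$. Summing $\alpha_n^{j_\theta}$ over the three copies overcounts each shared vertex that is in $\chi$ exactly once, so I would subtract $s_1+s_2+s_3$ to correct for the double counting. For $k=0$, for instance, all three outmost vertices are excluded, copy $\theta$ then contains a number of its outmost vertices equal to the sum of the two relevant $s_i$, and the cardinality becomes $\alpha_n^{s_1+s_2}+\alpha_n^{s_1+s_3}+\alpha_n^{s_2+s_3}-(s_1+s_2+s_3)$.

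Finally I would enumerate the $2^3=8$ configurations of $(s_1,s_2,s_3)$ for each of the four values $k=0,1,2,3$, collapse them under the residual symmetry (so that the eight configurations reduce to four distinct values for $k\in\{0,3\}$ and to six for $k\in\{1,2\}$), and take the maximum; reading these off should reproduce Eqs.~\eqref{SGInd02}--\eqref{SGInd05} term by term.

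The only genuinely delicate point is the bookkeeping: correctly identifying the pairing of shared vertices from the figure, and then tracking the $-(s_1+s_2+s_3)$ correction together with the per-copy corner counts, since a single miscount of the overlap shifts the additive constants ($-1,-2,-3$) that distinguish the various terms. I would also note that the restriction $n\geq 3$ guarantees each copy $\mathcal{S}_n$ has $n\geq 2$, so its three mutually non-adjacent outmost vertices make $\alpha_n^3$ well defined and every count $j\in\{0,1,2,3\}$ attainable, which is what legitimizes substituting $\alpha_n^{j}$ in each term.
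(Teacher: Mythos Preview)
Your proposal is correct and is essentially the same approach as the paper's: both enumerate, for each fixed pattern of inclusion of the outmost vertices of $\mathcal{S}_{n+1}$, the $2^3$ possible membership states of the three shared identification vertices, read off the per-copy contribution $\alpha_n^{j_\theta}$, correct for the double counting at shared vertices, and take the maximum. The paper simply delegates this case analysis to four figures (Figs.~\ref{SGTheta0}--\ref{SGTheta3}) without writing out the bookkeeping, whereas you spell out the overcount correction $-(s_1+s_2+s_3)$ and the symmetry reduction explicitly; your added remark that $n\ge 3$ ensures the three outmost vertices of each copy are pairwise non-adjacent (so $\alpha_n^3$ is genuinely attained) is a welcome clarification the paper leaves implicit.
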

\begin{proof}
This lemma can be proved graphically. Figs.~\ref{SGTheta0}-\ref{SGTheta3} illustrate the graphical representations from Eq.~\eqref{SGInd02} to Eq.~\eqref{SGInd05}.
\end{proof}

\begin{figure}[htbp]
\centering
\includegraphics[width=0.70\textwidth]{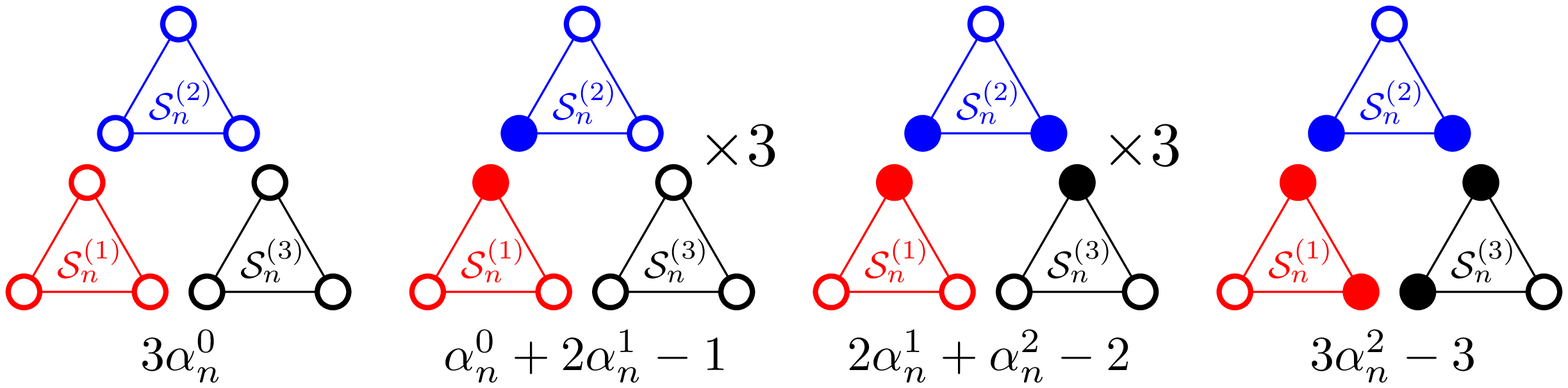}
\caption{\label{SGTheta0}Illustration of all possible configurations of independent sets $\Omega_{n+1}^0$ of $\mathcal{S}_{n+1}$, which contain $\Theta_{n+1}^0$. Only the outmost vertices of $\mathcal{S}_{n}^{(\theta)}$, $\theta=1,2,3$, are shown. Filled vertices are in the independent sets, while open vertices are not.}
\end{figure}

\begin{figure}[htbp]
\centering
\includegraphics[width=1.00\textwidth]{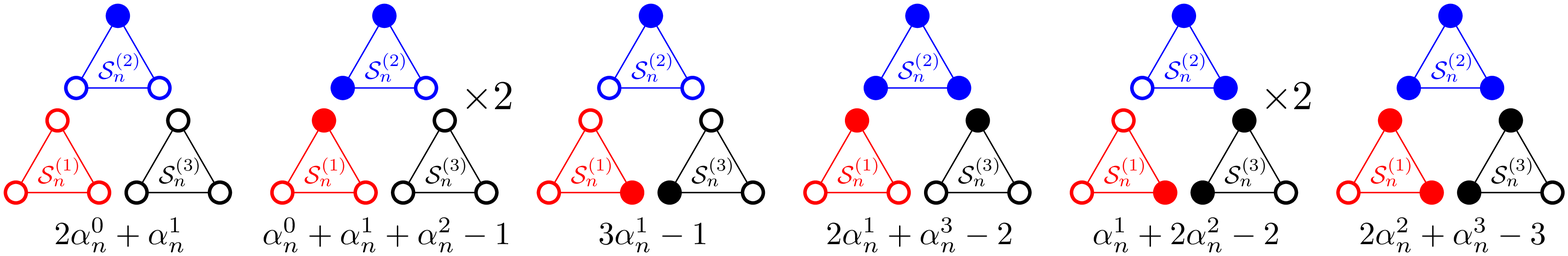}
\caption{\label{SGTheta1}Illustration of all possible configurations of  independent sets $\Omega_{n+1}^1$ of $\mathcal{S}_{n+1}$, which contain $\Theta_{n+1}^1$.  Note that here we only illustrate the  independent sets, each of which only includes $A_{n+1}$ but excludes   $B_{n+1}$, and $C_{n+1}$.  Since $A_{n+1}$, $B_{n+1}$, and $C_{n+1}$ are equivalent to each other, we omit other independent sets, including $B_{n+1}$ (resp. $C_{n+1}$) but excluding   $A_{n+1}$ and $C_{n+1}$ (resp.  $A_{n+1}$ and $B_{n+1}$) .}
\end{figure}

\begin{figure}[htbp]
\centering
\includegraphics[width=1.00\textwidth]{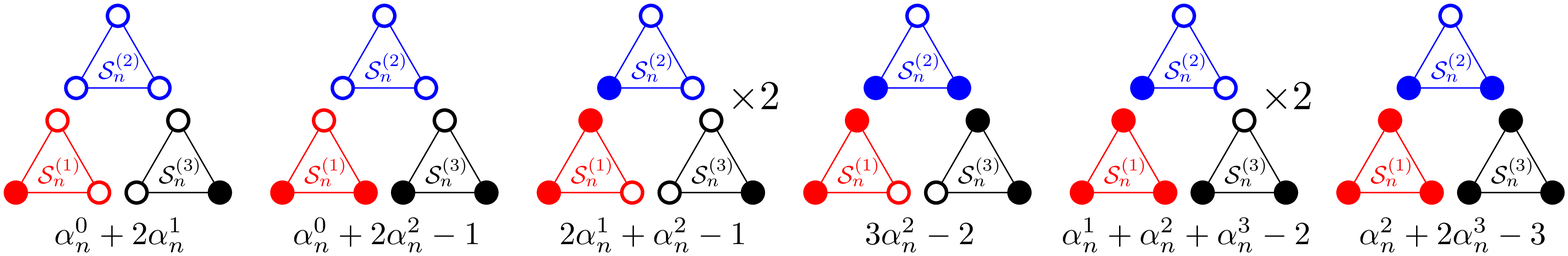}
\caption{\label{SGTheta2} Illustration of all possible configurations of  independent sets $\Omega_{n+1}^2$ of $\mathcal{S}_{n+1}$, which contain $\Theta_{n+1}^2$.  Note that here we only illustrate the  independent sets, each of which includes two outmost vertices $B_{n+1}$ and $C_{n+1}$, but excludes the outmost vertex  $A_{n+1}$.  Similarly, we can illustrate those independent sets, each  including  $A_{n+1}$ and $C_{n+1}$ (resp. $A_{n+1}$ and $B_{n+1}$), but excluding  $B_{n+1}$ (resp. $C_{n+1}$) .}
\end{figure}

\begin{figure}[htbp]
\centering
\includegraphics[width=0.70\textwidth]{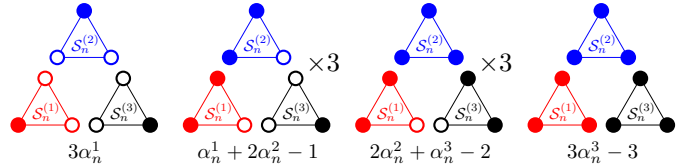}
\caption{\label{SGTheta3}Illustration of all possible configurations of  independent sets $\Omega_{n+1}^3$ of $\mathcal{S}_{n+1}$, which contain $\Theta_{n+1}^3$.}
\end{figure}



\begin{lemma}\label{leSGInd03}
For arbitrary $n \geq 2$, $\alpha_n^0+1 = \alpha_n^1 = \alpha_n^2 = \alpha_n^3-1$.
\end{lemma}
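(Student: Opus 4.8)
The plan is to prove the chain of equalities by induction on $n$, mirroring the proof of Lemma~\ref{Dom06} in the scale-free case but now tracking four quantities instead of two. First I would dispose of the base cases $n=2$ and $n=3$ by direct enumeration on the small graphs $\mathcal{S}_2$ and $\mathcal{S}_3$. For instance, on $\mathcal{S}_2$ (six vertices) one checks by hand that the three midpoints form a triangle, so an independent set avoiding all corners has size at most one, giving $\alpha_2^0=1$, whereas $\alpha_2^1=\alpha_2^2=2$ and $\alpha_2^3=3$; hence $\alpha_2^0+1=\alpha_2^1=\alpha_2^2=\alpha_2^3-1=2$. Both $n=2$ and $n=3$ must be treated separately because the recursions of Lemma~\ref{leSGInd02} are only asserted for $n\geq 3$ and so first produce generation $4$: the induction can be driven only for $n\geq 4$, with $n=3$ serving as the anchor of that recursion and $n=2$ covered on its own.

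For the inductive step, assume the claim at some $t\geq 3$ and set $a:=\alpha_t^1$, so that the hypothesis reads $\alpha_t^0=a-1$, $\alpha_t^1=\alpha_t^2=a$, and $\alpha_t^3=a+1$. Since $t\geq 3$, Lemma~\ref{leSGInd02} applies, and I would simply substitute these four values into each of the four maxima in Eqs.~\eqref{SGInd02}--\eqref{SGInd05} and evaluate; every bracketed term collapses to a small integer offset from $3a$. Collecting the dominating terms, the four maxima should come out as
\begin{equation*}
\alpha_{t+1}^0=3a-2,\quad \alpha_{t+1}^1=3a-1,\quad \alpha_{t+1}^2=3a-1,\quad \alpha_{t+1}^3=3a,
\end{equation*}
which satisfy $\alpha_{t+1}^0+1=\alpha_{t+1}^1=\alpha_{t+1}^2=\alpha_{t+1}^3-1=3a-1$, precisely the claim at generation $t+1$, closing the induction. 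As a by-product the argument reveals that the common middle value advances by the clean rule $\alpha_{t+1}^1=3\alpha_t^1-1$, which is exactly the recursion one would later solve for a closed form.

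The main obstacle is the bookkeeping inside the two long recursions, Eqs.~\eqref{SGInd03} and~\eqref{SGInd04}, each a maximum over six competing terms. The real content of the proof is to verify that, under the hypothesis $\alpha_t^0=a-1$, $\alpha_t^1=\alpha_t^2=a$, $\alpha_t^3=a+1$, the terms equal to $3a-1$ genuinely dominate those equal to $3a-2$ in both equations, and symmetrically that in Eq.~\eqref{SGInd02} the value $3a-2$ beats $3a-3$ while in Eq.~\eqref{SGInd05} the value $3a$ beats $3a-1$. All of these comparisons are between integers differing by exactly $1$, so there is no slack: a single term carried incorrectly would destroy the rigid pattern. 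I expect no conceptual difficulty beyond this careful case-checking, since the self-similar merging is already encoded in Lemma~\ref{leSGInd02}; what remains is to confirm that the hypothesized arithmetic profile is a fixed shape that reproduces itself, up to an additive shift, under the recursion.
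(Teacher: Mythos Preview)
Your proposal is correct and follows essentially the same route as the paper: induction on $n$, with the base case(s) verified by hand and the inductive step carried out by substituting the hypothesised values into the recursions of Lemma~\ref{leSGInd02} and reading off the maxima. The paper's proof is terser and takes only $n=2$ as its base case, whereas you correctly observe that Lemma~\ref{leSGInd02} is stated only for $n\geq 3$ and therefore add $n=3$ as a second base case before launching the induction at $t\geq 3$; this extra care is appropriate given the hypotheses as written, and the explicit evaluation $\alpha_{t+1}^0=3a-2$, $\alpha_{t+1}^1=\alpha_{t+1}^2=3a-1$, $\alpha_{t+1}^3=3a$ is exactly what the paper leaves implicit.
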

\begin{proof}
We prove this lemma by induction. \par For $n = 2$, it is easy to check that $\alpha_2^0 =1$, $\alpha_2^1 = 2$, $\alpha_2^2=2$, and $\alpha_2^3=3$. Thus, the result holds for $n = 2$. \par
Let us suppose that the statement is true for $ t$, $t \geq 2$.  For $t+1$, by
induction assumption and Lemma \ref{leSGInd02}, it is not difficult to check that the relation $\alpha_{t+1}^0+1 = \alpha_{t+1}^1 = \alpha_{t+1}^2 = \alpha_{t+1}^3-1$ is true. 
\end{proof}

\begin{theorem}\label{alphan}
The independence  number of the Sierpi\'nski gasket $\mathcal{S}_n$, $n\geq2$, is $\alpha_n = \frac{3^{n-1}+3}{2}$.
\end{theorem}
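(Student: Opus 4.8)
The plan is to use Lemma~\ref{leSGInd03} to collapse the four competing quantities to a single one, then turn the relevant case of Lemma~\ref{leSGInd02} into a first-order linear recursion for $\alpha_n$ and solve it.

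First I would observe that, for $n \geq 2$, the chain of equalities in Lemma~\ref{leSGInd03}, namely $\alpha_n^0 + 1 = \alpha_n^1 = \alpha_n^2 = \alpha_n^3 - 1$, forces the strict ordering $\alpha_n^0 < \alpha_n^1 = \alpha_n^2 < \alpha_n^3$. Hence $\alpha_n^3$ is strictly the largest of the four cardinalities, and the definition $\alpha_n = \max\{\alpha_n^0,\alpha_n^1,\alpha_n^2,\alpha_n^3\}$ reduces to $\alpha_n = \alpha_n^3$. In words, the independence number is realized by a maximum independent set containing all three outmost vertices.

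Next I would feed the relations of Lemma~\ref{leSGInd03} into Eq.~\eqref{SGInd05}. Writing $m := \alpha_n^1 = \alpha_n^2$ and $\alpha_n^3 = m+1$, each of the four arguments of the maximum in Eq.~\eqref{SGInd05} evaluates to either $3m$ or $3m-1$, so that $\alpha_{n+1}^3 = 3m = 3\alpha_n^2$. Since $\alpha_n = \alpha_n^3$ and $\alpha_n^2 = \alpha_n^3 - 1 = \alpha_n - 1$, this becomes the recursion $\alpha_{n+1} = 3\alpha_n - 3$. To finish, I would solve it: the fixed point of $y \mapsto 3y - 3$ is $y = \tfrac{3}{2}$, so the general solution is $\alpha_n = C\,3^n + \tfrac{3}{2}$, and anchoring at $\alpha_3 = 6$ (equivalently using $\alpha_2 = \max\{1,2,2,3\} = 3$) gives $C = \tfrac16$, whence $\alpha_n = \tfrac{3^{n-1}+3}{2}$.

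Because Lemmas~\ref{leSGInd02} and~\ref{leSGInd03} already perform the structural heavy lifting, the only genuine subtlety is bookkeeping the index ranges: Lemma~\ref{leSGInd02} is stated only for $n \geq 3$, so the recursion $\alpha_{n+1} = 3\alpha_n - 3$ is guaranteed only for $n \geq 3$ and does \emph{not} automatically link $\alpha_3$ to $\alpha_2$. I would therefore pin down the base value $\alpha_3 = 6$ separately (by direct inspection of $\mathcal{S}_3$) and verify that the resulting closed form also reproduces $\alpha_2 = 3$, so that the formula covers all $n \geq 2$. A reassuring consistency check is that this recursion is identical to the vertex-count recursion $N_{n+1} = 3N_n - 3$, which makes the shape of the answer $\tfrac{3^{n-1}+3}{2}$ unsurprising.
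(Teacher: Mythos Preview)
Your proof is correct and follows essentially the same route as the paper: use Lemma~\ref{leSGInd03} to reduce $\alpha_n$ to $\alpha_n^3$, plug into Eq.~\eqref{SGInd05} to obtain the recursion $\alpha_{n+1}=3\alpha_n-3$, and solve with the small-$n$ initial value. Your handling of the index range is in fact more careful than the paper's, which silently uses $\alpha_2=3$ as the base for a recursion that Lemma~\ref{leSGInd02} only states for $n\geq 3$; anchoring at $\alpha_3=6$ by direct inspection and then checking the formula at $n=2$, as you do, closes that small gap.
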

\begin{proof}
According  to Lemmas \ref{leSGInd02}  and~\ref{leSGInd03}, we obtain  $\alpha_{n+1} = \alpha_{n+1}^3 = 3\alpha_n^3-3 = 3 \alpha_n-3$. Considering $\alpha_2 = 3$, it is obvious that $\alpha_n = \frac{3^{n-1}+3}{2}$ holds for all $n \geq 2$.
\end{proof}

Theorems~\ref{SFIndN} and Theorem~\ref{alphan} show that the independence number of the Sierpi\'nski gasket $\mathcal{S}_n$ is larger than the one corresponding to the  pseudofracal scale-free web $\mathcal{G}_n$, with the former  being  as  half as the latter for large $n$.

\begin{corollary}
The largest possible number of vertices in an independent vertex set of $\mathcal{S}_n$, $n\geq2$, which contains exactly $0$, $1$ and $2$ outmost vertices, is $\alpha_n^0 = \frac{3^{n-1}-1}{2}$, $\alpha_n^1 = \frac{3^{n-1}+1}{2}$, and $\alpha_n^2 = \frac{3^{n-1}+1}{2}$, respectively.
\end{corollary}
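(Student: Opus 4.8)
The plan is to read off all three quantities directly from the two facts already established: Theorem~\ref{alphan}, which gives the closed form $\alpha_n = \frac{3^{n-1}+3}{2}$, and Lemma~\ref{leSGInd03}, which chains the four cardinalities together via $\alpha_n^0+1 = \alpha_n^1 = \alpha_n^2 = \alpha_n^3-1$. The one linking observation I need is that the independence number is attained by the configurations carrying all three outmost vertices, i.e.\ $\alpha_n^3 = \alpha_n$. This is implicit in the proof of Theorem~\ref{alphan}, where the recursion $\alpha_{n+1} = \alpha_{n+1}^3 = 3\alpha_n^3 - 3$ is used; together with the base case $\alpha_2^3 = 3 = \alpha_2$, it gives $\alpha_n^3 = \alpha_n$ for every $n \geq 2$.

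First I would record $\alpha_n^3 = \alpha_n = \frac{3^{n-1}+3}{2}$. Then Lemma~\ref{leSGInd03} yields $\alpha_n^2 = \alpha_n^3 - 1$ and $\alpha_n^1 = \alpha_n^2$, so
\begin{equation*}
\alpha_n^2 = \alpha_n^1 = \frac{3^{n-1}+3}{2} - 1 = \frac{3^{n-1}+1}{2}.
\end{equation*}
Finally $\alpha_n^0 = \alpha_n^1 - 1$, again from Lemma~\ref{leSGInd03}, gives
\begin{equation*}
\alpha_n^0 = \frac{3^{n-1}+1}{2} - 1 = \frac{3^{n-1}-1}{2},
\end{equation*}
which is exactly the claimed triple.

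Since the statement is a straightforward consequence of results proved earlier, there is no genuine obstacle here; the only point demanding care is justifying $\alpha_n^3 = \alpha_n$ over the full range $n \geq 2$ rather than merely extracting it for the inductive step, but this is settled by combining the recursion from Theorem~\ref{alphan} with the hand-computed base values. The whole argument is then a matter of subtracting $1$ twice from the known value of $\alpha_n$, so I would present it as a short direct calculation rather than an induction.
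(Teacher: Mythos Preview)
Your proposal is correct and follows essentially the same route as the paper: invoke $\alpha_n^3=\alpha_n=\frac{3^{n-1}+3}{2}$ from Theorem~\ref{alphan}, then subtract using the chain $\alpha_n^0+1=\alpha_n^1=\alpha_n^2=\alpha_n^3-1$ from Lemma~\ref{leSGInd03}. If anything, you are slightly more explicit than the paper in justifying $\alpha_n^3=\alpha_n$ over the whole range $n\geq 2$.
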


\begin{proof}
Theorem~\ref{alphan} shows $\alpha_n^3=\alpha_n=\frac{3^{n-1}+3}{2}$. From Lemma~\ref{leSGInd03}, we obtain $\alpha_n^0 = \alpha_n^3 -2$ and $\alpha_n^1 = \alpha_n^2 = \alpha_n^3 - 1$. Then, the results are obtained immediately.
\end{proof}

\subsection{The number of maximum independent  sets}

In comparison with the scale-free network  $\mathcal{G}_{n}$ with a unique maximum independent  set, the number of maximum independent sets of  $\mathcal{S}_{n}$ increases exponentially with the number of vertices.

\begin{theorem}
For $n\geq2$, the number of  maximum independent  sets of the Sierpi\'nski gasket $\mathcal{S}_n$ is $2^{\frac{3^{n-2}-1}{2}}$.
\end{theorem}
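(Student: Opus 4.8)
The plan is to count maximum independent sets by tracking not just the independence numbers $\alpha_n^k$ but also the \emph{number} of independent sets achieving each $\alpha_n^k$. Let me denote by $a_n^k$ the number of independent sets in $\Theta_n^k$, that is, the number of maximum-cardinality independent sets of $\mathcal{S}_n$ containing exactly $k$ outmost vertices. The quantity we want is $a_n^3$, since Theorem~\ref{alphan} together with Lemma~\ref{leSGInd03} shows that the global maximum independent sets of $\mathcal{S}_n$ are exactly those in $\Theta_n^3$ (the type with all three outmost vertices included attains the largest value $\alpha_n = \alpha_n^3$). So the first step is to set up recursions for the four counters $a_n^0, a_n^1, a_n^2, a_n^3$ that parallel the cardinality recursions in Lemma~\ref{leSGInd02}.

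Next I would extract, for each of the four recursions \eqref{SGInd02}--\eqref{SGInd05}, precisely which of the listed combinations actually realizes the maximum once we substitute the relations $\alpha_n^0+1=\alpha_n^1=\alpha_n^2=\alpha_n^3-1$ from Lemma~\ref{leSGInd03}. This is the key reduction: most of the terms in each $\max$ are strictly dominated, so only one (or a few) configurations of the three copies $\mathcal{S}_n^{(\theta)}$ contribute to an optimal set of the given type. For the quantity of interest, I expect the analysis of \eqref{SGInd05} to show that $\alpha_{n+1}^3 = 3\alpha_n^3 - 3$ is achieved \emph{only} by the configuration where all three copies are in their $\Theta_n^3$ state (each copy contributes a maximum set containing all three of its outmost vertices, and the shared vertices are counted consistently). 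If that is the unique maximizing configuration, then each maximum independent set of $\mathcal{S}_{n+1}$ is determined by an independent choice of a maximum set in $\Theta_n^3$ for each of the three copies, which would immediately give a product recursion $a_{n+1}^3 = (a_n^3)^3 \cdot c$ for some correction factor $c$ accounting for the identified vertices.

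Here is where the main obstacle lies, and where the exponent $\frac{3^{n-2}-1}{2}$ must come from. A naive cubing recursion $a_{n+1}=a_n^3$ with $a_2=2$ would give $a_n=2^{3^{n-2}}$, which is \emph{not} the claimed $2^{(3^{n-2}-1)/2}$; the exponent $\frac{3^{n-2}-1}{2}$ satisfies $e_{n+1}=3e_n+1$ with $e_2=0$, not $e_{n+1}=3e_n$. This tells me the correct recursion is not a pure cube. So the delicate step is to pin down, at the bottom level, how much genuine freedom there is: I suspect that at each merging the three identified outmost vertices force shared local structure so that the effective base case and the per-step multiplier are smaller than a naive count suggests, and that the honest recursion for the count is of the form $a_{n+1}^3 = (a_n^3)^3 / (\text{overlap factor})$ or, more likely, that one should track a different normalized quantity. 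I would therefore compute $a_2^3, a_3^3, a_4^3$ explicitly by hand to fix both the base value and the multiplicative law, verify they match $2^0, 2^1, 2^4$ (i.e. exponents $0,1,4$ obeying $e_{n+1}=3e_n+1$), and only then assert the general recursion.

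Finally, having identified the correct recursion $e_{n+1} = 3e_n + 1$ with $e_2 = 0$ for the exponent $e_n$ in $a_n^3 = 2^{e_n}$, I would solve it in closed form. The homogeneous solution is $C\cdot 3^n$ and a particular solution is the constant $-\tfrac{1}{2}$, giving $e_n = C\cdot 3^n - \tfrac12$; imposing $e_2 = 0$ fixes $C = \tfrac{1}{2\cdot 3^2}$, so that $e_n = \tfrac{3^{n-2}-1}{2}$, which yields the stated count $2^{(3^{n-2}-1)/2}$. The conceptual heart of the argument is the uniqueness-of-maximizing-configuration step combined with the correct bookkeeping of the shared outmost vertices across copies; the closed-form solution of the linear recursion is then routine.
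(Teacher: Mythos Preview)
Your overall strategy—tracking the counts $a_n^k$ alongside the cardinalities and singling out $a_n^3$ as the quantity to compute—is exactly right, and your final recursion $e_{n+1}=3e_n+1$ with $e_2=0$ is the correct one. But the mechanism you propose for the ``$+1$'' is wrong, and this is a genuine gap.

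The error is your expectation that the maximum in \eqref{SGInd05} is achieved \emph{only} by the all-$\Theta_n^3$ configuration. Substitute $\alpha_n^1=\alpha_n^3-1$ from Lemma~\ref{leSGInd03} into the four candidates of \eqref{SGInd05}:
\[
3\alpha_n^1 = 3\alpha_n^3-3,\quad \alpha_n^1+2\alpha_n^2-1 = 3\alpha_n^3-4,\quad 2\alpha_n^2+\alpha_n^3-2 = 3\alpha_n^3-4,\quad 3\alpha_n^3-3.
\]
So the maximum is a \emph{tie} between two configurations: all three copies in $\Theta_n^3$ (the one you found), and all three copies in a $\Theta_n^1$ state in which the single included outmost vertex is precisely the one that becomes a corner of $\mathcal{S}_{n+1}$. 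There is no ``overlap factor'' to divide by; the count for the all-$\Theta_n^3$ configuration really is $(a_n^3)^3$ on the nose. The extra growth comes from \emph{adding} the second configuration.

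Consequently you need an auxiliary counter: let $y_n$ be the number of independent sets of size $\alpha_n^1$ that contain the specific outmost vertex $A_n$ and exclude $B_n,C_n$ (so $a_n^1=3y_n$ by symmetry). The two tied configurations give
\[
a_{n+1}^3 \;=\; (a_n^3)^3 + y_n^{\,3}.
\]
An analogous tie analysis in \eqref{SGInd03} yields $y_{n+1}=y_n^{\,3}+a_n^3\,y_n^{\,2}$. Since $a_2^3=y_2=1$, induction gives $a_n^3=y_n$ for all $n\ge 2$, whence $a_{n+1}^3=2(a_n^3)^3$; this is exactly the recursion $e_{n+1}=3e_n+1$ you conjectured, and your closed-form solution then finishes the proof. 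This is precisely the route the paper takes.
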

\begin{proof}
Let $x_n$ denote the number of maximum independent sets of the Sierpi\'nski gasket $\mathcal{S}_{n}$. Let $y_n$ be the number of independent sets of  $\mathcal{S}_{n}$ with maximum number of vertices,   including only $A_{n}$ but excluding   $B_{n}$, and $C_{n}$. For  the initial condition $n=2$, we have $x_2=1$, $y_2=1$. For $n \geq 2$, we can prove that the two quantities $x_n$ and $y_n$  obey the following relations:
\begin{equation}\label{SGset02}
x_{n+1} = y_n^3+x_n^3,
\end{equation}
\begin{equation}\label{SGset03}
y_{n+1} = y_n^3+x_ny_n^2.
\end{equation}

We first prove Eq.~\eqref{SGset02}. By definition, $x_n$ is the number of different maximum independent  sets  for $\mathcal{S}_n$, each of which contains all the three outmost vertices of $\mathcal{S}_n$.
According to Lemma~\ref{leSGInd03} and Fig.~\ref{SGTheta3},  the two configurations in Fig.~\ref{SGTheta3} which maximize $\alpha^3_{n+1}$ (and thus $\alpha_{n+1}$) are when $S^{(\theta)}_n$, $\theta = 1, 2, 3$, contains exactly one outmost vertex and when it contains the three outmost vertices. Then, we can establish Eq.~\eqref{SGset02} by using the rotational symmetry of the Sierpi\'nski gasket.

Eq.~\eqref{SGset03} can be proved analogously by using  Lemma~\ref{leSGInd03} and Fig.~\ref{SGTheta1}.

Since $x_2=1$ and $y_2=1$,  Eqs.~\eqref{SGset02} and~\eqref{SGset03} show that  $x_n = y_n$ for all $n \geq 2$.  Then, we obtain a recursion relation for $x_n$ as $x_{n+1} = 2 x_n^3$, which together with the initial value  $x_2=1$ is solved to yield $x_n=2^{\frac{3^{n-2}-1}{2}}$.
\end{proof}

\section*{Acknowledgements}

This work is supported by the National Natural Science Foundation of China under Grant No. 11275049. 




\end{document}